\newcommand{\bra}[1]{\left\langle{#1}\right\vert}
\newcommand{\ket}[1]{\left\vert{#1}\right\rangle}
\newcommand{\qw}[1][-1]{\ar @{-} [0,#1]}
\newcommand{\measureD}[1]{*{\xy*+=+<.5em>{\vphantom{\rule{0em}{.1em}#1}}*\cir{r_l};p\save*!R{#1} \restore\save+UC;+UC-<.5em,0em>*!R{\hphantom{#1}}+L **\dir{-} \restore\save+DC;+DC-<.5em,0em>*!R{\hphantom{#1}}+L **\dir{-} \restore\POS+UC-<.5em,0em>*!R{\hphantom{#1}}+L;+DC-<.5em,0em>*!R{\hphantom{#1}}+L **\dir{-} \endxy} \qw}
\newcommand{\multigate}[2]{*+<1em,.9em>{\hphantom{#2}} \qw \POS[0,0].[#1,0];p !C *{#2},p \save+LU;+RU **\dir{-}\restore\save+RU;+RD **\dir{-}\restore\save+RD;+LD **\dir{-}\restore\save+LD;+LU **\dir{-}\restore}
\newcommand{\ghost}[1]{*+<1em,.9em>{\hphantom{#1}} \qw}
\newcommand{\ustick}[1]{*!D!<0em,-.5em>=<0em>{#1}}
\newcommand{\Qcircuit}[1][0em]{\xymatrix @*=<#1>}
\newcommand{\pureghost}[1]{*+<1em,.9em>{\hphantom{#1}}}
\newcommand{\multiprepareC}[2]{*+<1em,.9em>{\hphantom{#2}}\save[0,0].[#1,0];p\save !C
  *{#2},p+RU+<0em,0em>;+LU+<+.8em,0em> **\dir{-}\restore\save +RD;+RU **\dir{-}\restore\save
  +RD;+LD+<.8em,0em> **\dir{-} \restore\save +LD+<0em,.8em>;+LU-<0em,.8em> **\dir{-} \restore \POS
  !UL*!UL{\cir<.9em>{u_r}};!DL*!DL{\cir<.9em>{l_u}}\restore}
\newcommand{\prepareC}[1]{*{\xy*+=+<.5em>{\vphantom{#1\rule{0em}{.1em}}}*\cir{l^r};p\save*!L{#1} \restore\save+UC;+UC+<.5em,0em>*!L{\hphantom{#1}}+R **\dir{-} \restore\save+DC;+DC+<.5em,0em>*!L{\hphantom{#1}}+R **\dir{-} \restore\POS+UC+<.5em,0em>*!L{\hphantom{#1}}+R;+DC+<.5em,0em>*!L{\hphantom{#1}}+R **\dir{-} \endxy}}
\newcommand{\kket}[1]{|#1\rangle\rangle}
\newcommand{\eq}[1]{Eq. (\ref{#1})}
\def\Tr{\operatorname{Tr}} \def\span{\operatorname{span}}
\def\rank{\operatorname{rank}} \def\supp{\operatorname{supp}}
\newtheorem{thm}{Proposition} 
\begin{document}

\title{Quantum error correction with degenerate codes for correlated noise}

\author{Giulio \surname{Chiribella}}

\affiliation{Perimeter Institute for Theoretical Physics, 31 Caroline
  St. North, Waterloo, Ontario N2L 2Y5, Canada}

\author{Michele \surname{Dall'Arno}} \author{Giacomo Mauro \surname{D'Ariano}}
\author{Chiara \surname{Macchiavello}} \author{Paolo \surname{Perinotti}}

\affiliation{Quit group, Dipartimento di Fisica ``A. Volta'', via Bassi 6,
  I-27100 Pavia, Italy} \affiliation{INFN Sezione di Pavia, via Bassi 6,
  I-27100 Pavia, Italy}

\date{\today}

\begin{abstract}
  We introduce a quantum packing bound on the minimal resources required by
  nondegenerate error correction codes for any kind of noise. We prove that
  degenerate codes can outperform nondegenerate ones in the presence of
  correlated noise, by exhibiting examples where the quantum packing bound is
  violated.
\end{abstract}

\maketitle

\section{Introduction}

Since its early development in 1995 \cite{shor95, steane96, gottesman96,
  lidar98, keyl02, gregoratti03, kribs04}, the theory of quantum error
correction has played a major role to design strategies for protecting quantum
information in the presence of noise. This task is particularly relevant in
several contexts, such as the communication of quantum information over
quantum channels \cite{NC00}, and fault tolerant quantum computation
\cite{AKP06}. In view of the massive experimental effort in investigating
suitable quantum computational systems and of future implementations, it is of
great importance to establish what are the minimum resources needed to have
successful quantum error correction.

A useful bound that allows to quantify them is the quantum Hamming bound
\cite{ekert96} for nondegenerate codes, namely codes for which each error is
individually identifiable. This bound holds when the dominant terms in the
noise process correspond to all the error operators that involve at most a
fixed number of subsystems. This is the case, for example, when the noise
affects independently every single subsystem (uncorrelated
noise). Intuitively, the quantum Hamming bound can be explained from the fact
that if each error is individually identifiable, then it must send the encoded
information into orthogonal subspaces, thus setting a lower bound on the
dimension of the system. Actually, so far no degenerate code has been proved
to violate the quantum Hamming bound \cite{gottesman09}, and for some classes
of degenerate codes the impossibility of violating the bound has been
demonstrated \cite{sarvepalli08}.

Here, we derive a general bound that we call quantum packing bound,
constraining the resource requirements for the correction of any kind of noise
process. In particular, the quantum Hamming bound is an instance of the
quantum packing bound for the case of arbitrary noise process affecting at most
a fixed number of systems.

The assumption of uncorrelated noise may not hold in many physical
implementations of fault tolerant quantum computers, such as ion traps
\cite{ion_traps}, quantum dots \cite{quantum_dots} or solid state systems
\cite{solid_state}. In this work we study the resource requirements of error
correction codes in the presence of correlated noise, namely noise processes
where perfect correlations among the encoding subsystems dominate and
therefore not all the strings of single particle noisy processes are
relevant. We show that degenerate codes can outperform nondegenerate ones in
the presence of correlated noise. The resource requirements of quantum error
correction codes for a particular class of correlated errors, namely spatially
correlated (or burst) errors, have been studied in \cite{VRA99}.

The paper is organized as follows. In Sect. \ref{sect:bound} we provide the
quantum packing bound on the minimal resources required by a nondegenerate
code in the presence of any kind of noise. We refer to such bound as quantum
packing bound. We prove that the quantum Hamming bound can be recovered as a
particular case. In Sect. \ref{sect:examples} we show that in the presence of
correlated noise the quantum packing bound can be violated by degenerate
codes, which lead to a much more compact transmission of information,
unveiling the fact that, for some noise channels, degenerate codes can work
better than nondegenerate codes in terms of resources required. Finally, we
summarize our results in Sect. \ref{sect:conclusion}.

\section{Quantum packing bound}\label{sect:bound}

An $[[n, k, d]]_q$ quantum error-correcting code is given by a
$q^k$-dimensional subspace of the state space $\mathscr{H} =
(\mathbb{C}^q)^{\otimes n}$ of $n$ quantum systems with $q$ levels where it is
possible to correct all errors affecting at most $(d-1)/2$ quantum systems.
We denote by $P_Q$ the projector onto the quantum code $Q$.  Let $S_E$ denote
a subspace of linear operators on $\mathscr H$.  The quantum code $Q$ is able
to correct all errors in $S_E$ if and only if there exists an Hermitian matrix
$M$ such that for any pair of error operators $L_i,L_j$ which belong to a
basis on $S_E$ \cite{knill97}
\begin{equation}\label{eq:knill_laflamme}
  P_Q L_i^\dagger L_j P_Q = M_{ij} P_Q.
\end{equation}
In the above expression, known as Knill-Laflamme correctability condition,
$M_{ij}$ are the entries of the Hermitian matrix $M$, that depends on the
choice of the $L_i$'s.  The pair $(Q, S_E)$, consisting of a quantum code $Q$
and a vector space of errors $S_E$, is called {\em degenerate} if and only if
the Hermitian matrix $M$ in \eq{eq:knill_laflamme} is singular; otherwise,
$(Q, S_E)$ is called {\em nondegenerate}.  In other words, in the case of
nondegenerate codes and with a suitable choice of error operators, the quantum
code is transformed into a set of distinct orthogonal subspaces by applying
the error operators, while for degenerate codes it may happen that distinct
error operators transform the code into the same subspace.

We derive now a bound for nondegenerate codes, which does not depend on the
form of the noise acting on the encoding system, and which can be reduced to
the well known quantum Hamming bound \cite{ekert96} as a particular case. To
this purpose, we first give a brief overview on a few results about error
correction.  Let us denote the {\em system} by $S$ and the {\em encoding
  subspace} by $Q \subseteq S$.  Given a state $\rho^S$, we say that a channel
$\cal{E}$ is \emph{correctable upon input of $\rho^S$} if and only if there
exists a channel $\cal{R}$ such that $\cal R \cal E (\sigma) = \sigma$ for
every $\sigma$ with $\supp(\sigma) \subseteq \supp(\rho^S)$. Clearly, if
$\supp(\rho^S) = Q$, this is equivalent to say that $Q$ is a good quantum
code.

Let us introduce a purification $\rho^{SR}$ of $\rho^S$, $R$ being the
\emph{reference}.  It is easy to see that for arbitrary channels $\cal C$ and
$\cal D$ we have ${\cal I}_R \otimes {\cal C} (\rho^{SR}) = {\cal I}_R \otimes
\cal{D} (\rho^{SR})$ if and only if $\cal{C} (\sigma) = \cal{D} (\sigma)$ for
any $\sigma$ with $\supp(\sigma) \subseteq \supp(\rho^S)$.  This fact implies
that $\cal E$ is correctable upon input of $\rho$ if and only if we have
${\cal I}_R \otimes {\cal{RE}} (\rho^{SR}) = \rho^{SR}$.  Taking a unitary
dilation $(U_{\cal{E}}, \ket \eta)$ [$(U_{\cal{R}},\ket\xi$)] of channel
$\cal{E}$ ($\cal R$) with {\em environment} $E$ ($A$), this is equivalent to
the following equation:
\begin{equation}\label{eq:pur00}
  \begin{aligned}
    \Qcircuit @C=2mm @R=2mm{ \multiprepareC{1}{\rho^{SR}} & \ustick{R} \qw &
      \qw & \qw & \qw & \qw & \qw \\ \pureghost{\rho^{SR}} & \ustick{S} \qw &
      \multigate{2}{U_{\cal{E}}} & \qw & \qw & \multigate{1}{U_{\cal{R}}} &
      \qw \\ & & \pureghost{U_{\cal{E}}} & \prepareC{\ket{\xi}} & \ustick{A}
      \qw & \ghost{U_{\cal{R}}} & \measureD{I} \\ \prepareC{{\ket\eta}} &
      \ustick{E} \qw & \ghost{U_{\cal{E}}} & \qw & \qw & \qw & \measureD{I}
      \\ }
  \end{aligned}
  =
  \begin{aligned}
    \Qcircuit @C=2mm @R=2mm { \multiprepareC{1}{\rho^{SR}} & \ustick{R} \qw &
      \qw \\ \pureghost{\rho^{SR}} & \ustick{S} \qw & \qw
      \\ \prepareC{\ket{\xi}} & \ustick{A} \qw & \measureD{I}
      \\ \prepareC{\ket{\eta}} & \ustick{E} \qw & \measureD{I} }
  \end{aligned},
\end{equation}
where $\Qcircuit @C=2mm @R=2mm{\pureghost{} & \measureD{I}}$ represents the
partial trace.  Consider the circuit on the left hand side. Denote by
$\rho^{S'R'E'}$ the state of $S$, $R$, and $E$ after the action of
$U_{\cal{E}}$, and by $\rho^{S'R'}$ ($\rho^{R'E'}$) its marginal on $SR$
($RE$). Since $\rho^{S'R'E'}$ is a purification for $\rho^{R'E'}$ we have
\begin{equation}\label{eq:dim00}
  \dim(S) \ge \rank(\rho^{R'E'}).
\end{equation}

We now give two other necessary and sufficient conditions for correctability,
which imply the quantum packing bound.

\begin{thm}\label{thm:uncorrelated}
  A channel $\cal E$ is correctable upon input of $\rho^S$ if and only if the
  reference $R$ and the environment $E$ are uncorrelated after the
  interaction, i.e. $\rho^{R'E'} = \rho^{R'} \otimes \rho^{E'}$ (see,
  e.g. \cite{SW02,CDP10}).
\end{thm}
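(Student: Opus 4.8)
The plan is to derive both implications from two standard facts about purifications: (i) any two purifications of the same state are related by a partial isometry acting on the purifying systems only, which can be taken to be an isometry after enlarging one ancilla; and (ii) every purification of a \emph{pure} state factorizes as that state tensored with a pure state of the purifying systems. I would work throughout in the dilated picture set up around \eq{eq:pur00}: applying $U_{\cal E}$ on $S$ and $E$ to the pure state $\rho^{SR}\otimes\ket{\eta}\bra{\eta}$ produces the pure state $\rho^{S'R'E'}$, whose marginal on $R$ and $E$ is $\rho^{R'E'}$ and whose marginal on $R$ alone is $\rho^{R'}=\Tr_S\rho^{SR}=\rho^R$; crucially, the $RE$-marginal is left untouched by any subsequent operation supported on $S$ and on the recovery ancilla $A$, since neither $U_{\cal E}$ nor $U_{\cal R}$ acts on $R$ or $E$.

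For the ``only if'' direction I would assume correctability, i.e. the circuit identity \eq{eq:pur00}. Applying the dilation $U_{\cal R}$ on $S$ and $A$ to the global pure state $\rho^{S'R'E'}\otimes\ket{\xi}\bra{\xi}$ keeps it pure, and by \eq{eq:pur00} its marginal on $S$ and $R$ equals the pure state $\rho^{SR}$; hence, by fact (ii), the global state factorizes into $\rho^{SR}$ and a pure state of the complementary systems, so tracing down to $R$ and $E$ gives a product state. Since this $RE$-marginal coincides with the one immediately after $U_{\cal E}$, it equals $\rho^{R'E'}$, and a product state is determined by its marginals, so its factors must be $\rho^{R'}$ and $\rho^{E'}$, giving $\rho^{R'E'}=\rho^{R'}\otimes\rho^{E'}$.

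For the ``if'' direction I would assume $\rho^{R'E'}=\rho^{R'}\otimes\rho^{E'}$, so that $\rho^{S'R'E'}$ is a purification of $\rho^{R'}\otimes\rho^{E'}$ with purifying system $S$. Now fix any purification $\ket{\phi}$ of $\rho^{E'}$ on $E$ together with a fresh ancilla $B$; using $\rho^{R'}=\rho^R$, the state $\rho^{SR}\otimes\ket{\phi}\bra{\phi}$ is another purification of $\rho^{R'}\otimes\rho^{E'}$, this time with purifying system $S\otimes B$. By fact (i) there is an isometry $W$ from $S$ into $S\otimes B$ intertwining the two purifications, and I would take as recovery channel ${\cal R}(\cdot)=\Tr_B[\,W(\cdot)W^\dagger\,]$. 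A direct computation then shows that $({\cal I}_R\otimes{\cal R})({\cal I}_R\otimes{\cal E})(\rho^{SR})$ equals the $SR$-marginal of $\rho^{SR}\otimes\ket{\phi}\bra{\phi}$, namely $\rho^{SR}$, which by the equivalences established above is exactly the statement that ${\cal E}$ is correctable upon input of $\rho^S$.

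I expect the only delicate point to be in the ``if'' direction: one must take the ancilla $B$ large enough that the map connecting the two purifications is a genuine isometry rather than merely a partial isometry, so that ${\cal R}$ is a legitimate channel; with that understood, both implications are immediate consequences of facts (i) and (ii). The same result can be phrased as the statement that the complementary channel of ${\cal E}$ is constant on inputs supported on $\supp(\rho^S)$, which is the form in which it appears in \cite{SW02,CDP10}, but the purification argument above is the most self-contained given what has already been set up here.
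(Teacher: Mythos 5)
Your argument is correct, and it actually does more than the paper: the paper explicitly proves only the necessity direction (the one needed for the packing bound), deferring the full equivalence to the cited references \cite{SW02,CDP10}, whereas you prove both implications. For necessity your route is essentially the paper's, with a cosmetic variant: the paper observes that the global output state and $\rho^{SR}\otimes\eta^E\otimes\xi^A$ are two purifications of $\rho^{SR}$ and hence related by a unitary $U_{\cal P}$ on $EA$ (a form it reuses in the proof of Prop.~\ref{thm:deletion}), then traces out $S$ and $A$; you instead note directly that a globally pure state whose $SR$-marginal is the pure state $\rho^{SR}$ must factorize as $\rho^{SR}$ tensored with a pure state on $EA$, which gives the product form of $\rho^{R'E'}$ after discarding $S$ and $A$. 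The two are logically equivalent instances of the uniqueness/factorization of purifications. Your sufficiency proof is the standard decoupling argument: since $\rho^{S'R'E'}$ and $\rho^{SR}\otimes\ket{\phi}\bra{\phi}$ purify the same product state $\rho^{R'}\otimes\rho^{E'}$ with purifying systems $S$ and $S\otimes B$ respectively, an isometry $W:S\to S\otimes B$ intertwines them, and ${\cal R}(\cdot)=\Tr_B[W(\cdot)W^\dagger]$ recovers $\rho^{SR}$; the point you flag about extending the partial isometry is in fact automatic here, since $\dim(S\otimes B)\ge\dim(S)$ always allows completion of the Schmidt-basis map to a genuine isometry. So the proposal is sound; its only divergence from the paper is that it supplies the converse the paper deliberately leaves to the literature.
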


\begin{proof}
  We repeat here only the proof of necessity since it is the only part needed.
  Calling $\rho^{S''R''E''A''}$ the state of $SREA$ after the action of
  $U_{\cal E}$ and $U_{\cal R}$, Eq. (\ref{eq:pur00}) is nothing but the
  statement that $\rho^{S''R''E''A''}$ and $\rho^{SR} \otimes \eta^E\otimes
  \xi^F$ are both purifications of $\rho^{RS}$. Therefore, there exists a
  unitary $U_{\cal{P}}$ such that
  \begin{equation}\label{pur01}
    \begin{aligned}
      \Qcircuit @C=2mm @R=2mm{ \multiprepareC{1}{\rho^{SR}} & \ustick{R} \qw &
        \qw & \qw & \qw & \qw & \qw \\ \pureghost{\rho^{SR}} & \ustick{S} \qw
        & \multigate{2}{U_{\cal{E}}} & \qw & \qw & \multigate{1}{U_{\cal{R}}}
        & \qw \\ & & \pureghost{U_{\cal{E}}} & \prepareC{\ket{\xi}} &
        \ustick{A} \qw & \ghost{U_{\cal{R}}} & \qw \\ \prepareC{\ket{\eta}} &
        \ustick{E} \qw & \ghost{U_{\cal{E}}} & \qw & \qw & \qw & \qw \\ }
    \end{aligned}
    =
    \begin{aligned}
      \Qcircuit @C=2mm @R=2mm{ \multiprepareC{1}{\rho^{SR}} & \ustick{R} \qw &
        \qw & \qw \\ \pureghost{\rho^{SR}} & \ustick{S} \qw & \qw & \qw
        \\ \prepareC{\ket{\xi}} & \ustick{A} \qw & \multigate{1}{U_{\cal{P}}}
        & \qw \\ \prepareC{\ket{\eta}} & \ustick{E} \qw & \ghost{U_{\cal{P}}}
        & \qw }
    \end{aligned}.
  \end{equation}
  Discarding systems $S$ and $A$ on both sides one then obtains
  $\rho^{R'E'} = \rho^{R'} \otimes \rho^{E'}$.
\end{proof}

The second necessary and sufficient condition is:

\begin{thm}\label{thm:deletion}
  A channel $\cal E$ is correctable upon input of $\rho^S$ if and only if its
  complementary channel $\cal \tilde{E}$ - namely the channel from $S$ to $E'$
  obtained by tracing $S'$ instead of $E'$ - is a deletion channel upon input
  of $\rho^S$, i.e. $ \tilde {\cal E}(\sigma) = \rho^{E'}$ for every $\sigma$
  with ${\rm supp} (\sigma) \subseteq {\supp} (\rho^S)$ \cite{kks} (see also
  \cite{CDP10} for a graphical proof).
\end{thm}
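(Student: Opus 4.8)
The plan is to deduce the statement from Proposition~\ref{thm:uncorrelated} by recasting the deletion property of the complementary channel as the factorization $\rho^{R'E'}=\rho^{R'}\otimes\rho^{E'}$ of the post-interaction state. The basic tool is the equivalence stated in the discussion preceding \eq{eq:pur00}: for any two channels $\mathcal C,\mathcal D$ with input $S$, one has $\mathcal C(\sigma)=\mathcal D(\sigma)$ for every $\sigma$ with $\supp(\sigma)\subseteq\supp(\rho^S)$ if and only if $\mathcal I_R\otimes\mathcal C(\rho^{SR})=\mathcal I_R\otimes\mathcal D(\rho^{SR})$, where $\rho^{SR}$ is the fixed purification of $\rho^S$. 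I would apply this with $\mathcal C=\tilde{\mathcal E}$ and $\mathcal D$ the constant channel $\sigma\mapsto\Tr[\sigma]\,\rho^{E'}$, so that the statement ``$\tilde{\mathcal E}$ is a deletion channel upon input of $\rho^S$'' becomes the single operator identity $\mathcal I_R\otimes\tilde{\mathcal E}(\rho^{SR})=\rho^{R}\otimes\rho^{E'}$.

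The next step is to identify both sides of this identity with objects already in the text. By definition of the complementary channel --- tracing $S'$ instead of $E'$ at the output of the dilation $(U_{\mathcal E},\ket{\eta})$ --- the left-hand side is precisely the marginal $\rho^{R'E'}$ of the state produced by $U_{\mathcal E}$; and since $U_{\mathcal E}$ acts only on $S$ and $E$, the reduced state of the reference is unchanged, $\rho^{R}=\rho^{R'}$. Hence the deletion property is equivalent to $\rho^{R'E'}=\rho^{R'}\otimes\rho^{E'}$. Combining this with Proposition~\ref{thm:uncorrelated}, which asserts that correctability of $\mathcal E$ upon input of $\rho^S$ is equivalent to exactly this factorization, gives the claim.

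The computation itself is short; the point to watch is that, in contrast with Proposition~\ref{thm:uncorrelated}, one here needs the \emph{full} biconditional of that proposition, i.e. also the sufficiency direction (factorization $\Rightarrow$ correctable), which was not reproved above but is standard \cite{SW02,CDP10}. One should also check that the constant channel chosen for $\mathcal D$ outputs exactly the environment marginal $\rho^{E'}$, which is automatic since $\rho^S$ is itself an admissible input and forces $\tilde{\mathcal E}(\rho^S)=\rho^{E'}$. A fully self-contained alternative is the diagrammatic proof of \cite{CDP10}: draw the circuit expressing that $\tilde{\mathcal E}$ deletes $\rho^S$ (its output on $\rho^{SR}$ being $\rho^R\otimes\rho^{E'}$), then take the complement of that identity by reinstating the traced system $S'$; the resulting diagram is visibly a unitary dilation of a recovery channel $\mathcal R$ realizing \eq{eq:pur00}. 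On that route I expect the main obstacle to be making the complementation step precise, namely arguing that a channel deleting $\rho^S$ is complementary to one that is perfectly correctable on $\supp(\rho^S)$.
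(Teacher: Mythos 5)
Your argument is correct, but it follows a genuinely different route from the paper's. The paper proves only the necessity direction, and does so directly from the dilation identity: it takes Eq.~(\ref{pur01}), obtained in the proof of Prop.~\ref{thm:uncorrelated} from essential uniqueness of purification (the unitary $U_{\cal P}$ acting on $A$ and $E$ alone), restricts it to an arbitrary input $\sigma$ with $\supp(\sigma)\subseteq\supp(\rho^S)$ as in Eq.~(\ref{pur02}), and traces over $S$ and $A$; the right-hand side is then manifestly a fixed state of $E$ independent of $\sigma$, identified with $\rho^{E'}$. You instead never touch the circuit identity: you translate ``$\tilde{\mathcal E}$ deletes on $\supp(\rho^S)$'' into the single equation $\mathcal I_R\otimes\tilde{\mathcal E}(\rho^{SR})=\rho^{R'}\otimes\rho^{E'}$ via the steering equivalence stated before Eq.~(\ref{eq:pur00}), note that the left-hand side is exactly $\rho^{R'E'}$ and that $\rho^{R}=\rho^{R'}$, and then invoke Prop.~\ref{thm:uncorrelated}. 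This modular reduction buys you the full biconditional, whereas the paper's proof deliberately gives only necessity (all it needs for the packing bound); the price, which you correctly flag, is that your sufficiency direction leans on the unproved (cited) half of Prop.~\ref{thm:uncorrelated}. For necessity your route uses only the proven half of Prop.~\ref{thm:uncorrelated} plus the stated equivalence, so nothing essential is missing, and your observation that the constant channel's output must be $\rho^{E'}$ (because $\rho^S$ itself is an admissible input) correctly disposes of the one point where the identification could have gone wrong. Your sketched diagrammatic alternative is essentially the route of \cite{CDP10} and is not what the paper does either.
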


\begin{proof}
  We reproduce here only the proof of necessity because it is the only part
  needed for our considerations.  Eq. (\ref{pur01}) implies that for every
  $\sigma$ with ${\rm supp} (\sigma) \subseteq {\rm supp} (\rho^S)$ we have
  \begin{equation}\label{pur02}
    \begin{aligned}
      \Qcircuit @C=2mm @R=2mm{ \prepareC{\sigma} & \ustick{S} \qw &
        \multigate{2}{U_{\cal{E}}} & \qw & \qw & \multigate{1}{U_{\cal{R}}} &
        \qw\\ & & \pureghost{U_{\cal{E}}} & \prepareC{\ket{\xi}} & \ustick{A}
        \qw & \ghost{U_{\cal{R}}} & \qw \\ \prepareC{\ket{\eta}} & \ustick{E}
        \qw & \ghost{U_{\cal{E}}} & \qw & \qw & \qw & \qw }
    \end{aligned}
    =
    \begin{aligned}
      \Qcircuit @C=2mm @R=2mm{ \prepareC{\sigma} & \ustick{S} \qw & \qw & \qw
        \\ \prepareC{\ket{\xi}} & \ustick{A} \qw & \multigate{1}{U_{\cal{P}}}
        & \qw \\ \prepareC{\ket{\eta}} & \ustick{E} \qw & \ghost{U_{\cal{P}}}
        & \qw }
    \end{aligned}.
  \end{equation}
  Taking the partial trace over $S$ and $A$ on both sides we then obtain
  $\tilde {\cal E } (\sigma) = \rho^{E'}$, thus proving that $\tilde {\cal E}$
  is a deletion channel.
\end{proof}

The proofs of Prop. \ref{thm:uncorrelated} and \ref{thm:deletion} rely only
upon the very general requirement that any mixed state admits a unique
purification up to reversible transformations, thus holding for any
probabilistic theory with purification \cite{CDP10}.

Restricting now Prop. \ref{thm:uncorrelated} and \ref{thm:deletion} to the
quantum case, we derive the following bounds.
Using Prop. \ref{thm:uncorrelated}, \eq{eq:dim00} becomes
\begin{equation}\label{eq:dim01}
  \dim(S) \ge \rank(\rho^{R'})\rank(\rho^{E'}) =
  \rank(\rho^{S})\rank(\rho^{E'}),
\end{equation}
where last equality holds since $\cal{E}$ does not act on $R$ and $R$ purifies
$\rho^S$, so $\rank(\rho^{R'})=\rank(\rho^{R})=\rank(\rho^{S})$.
Proposition \ref{thm:deletion} allows us to identify the matrix $M$ in the
Knill-Laflamme condition (\ref{eq:knill_laflamme}) with (the transpose of)
$\rho^{E'}$. Indeed, for every (unnormalized) state of the form $P_Q \rho
P_Q$ the complementary channel $\tilde{\cal E}$ acts as
\begin{eqnarray}\label{eq:erasing_unitary}
  \tilde{\cal E}(P_Q \rho P_Q) &=& \Tr_S[U_{\cal E} (P_Q \rho P_Q \otimes
    \ket{\eta}\bra{\eta}) U_{\cal E}^\dagger] \\ &= &\Tr[P_Q \rho P_Q]
  \rho^{E'}, \nonumber
\end{eqnarray}
If $\cal E$ has Kraus decomposition $\mathcal{E}(\rho) = \sum_i L_i \rho
L_{i}^\dagger$ and $U_{\cal E}\ket{\eta} = \sum_i L_i \otimes \ket{e_i}$,
where $\ket{e_i}$ is an orthonormal set in $E$, then \eq{eq:erasing_unitary}
becomes
\begin{equation}\label{eq:erasing_krauss}
  \Tr_S[L_i P_Q \rho P_Q L_j^\dagger] = \Tr[P_Q \rho P_Q] \rho_{ij}^{E'}
\end{equation}
Taking $\rho = |\psi \rangle\langle\psi|$ with an arbitrary $\ket{\psi} \in
\mathscr H$, \eq{eq:erasing_krauss} becomes
\begin{equation}
  \bra{\psi} P_Q L_j^\dagger L_i P_Q\ket{\psi} = \bra{\psi} P_Q \ket{\psi}
  \rho_{ij}^{E'},
\end{equation}
which is equivalent to the Knill-Laflamme condition (\ref{eq:knill_laflamme})
with $\rho^{E'} = M^T$.
Summarizing, we proved the following result,
\begin{equation}
  \dim(S) \ge \rank(\rho^{S}) \rank(M),
\end{equation}
or, equivalently,
\begin{equation}\label{eq:bound}
  \dim (S) \ge \dim (Q) \rank (M).
\end{equation}
Notice that $\rank(M)$ does not depend on the choice of Kraus operators
$\{L_i\}$. In particular, to compute $\rank(M)$ we can use a minimal Kraus
decomposition $\mathcal{E}(\rho) = \sum_{i=1}^{\rank (R_{\cal E}) } K_i \rho
K_{i}^\dagger$, whose cardinality is equal to the rank of the Choi-Jamio\l
kowski operator $R_{\cal E} = ({\cal E} \otimes {\cal I}) (|I\rangle\!\rangle
\langle\!\langle I|)$, obtained by applying the channel $\cal E$ on one side
of the maximally entangled vector $|I\rangle\!\rangle = \sum_{n=1}^d
|n\rangle|n\rangle$.

We now consider the case of nondegenerate codes, i.e. codes for which the
matrix $M$ is non-singular.  In this case $\textrm{rank}(M)$ equals
$\rank(R_{\cal{E}})$, namely the cardinality of the minimal Kraus $\{K_i\}$.

\begin{thm}[Quantum packing bound]\label{thm:packing}
  Given a quantum channel $\cal{E}$ with Choi-Jamio\l kowski operator $R_{\cal
    E}$, any nondegenerate code $Q$ subspace of the system $S$ must satisfy
  \begin{equation}\label{eq:packing}
    \dim(S) \ge \dim(Q) \rank(R_{\cal{E}}).
  \end{equation}
  We refer to Eq. (\ref{eq:packing}) as quantum packing bound.
\end{thm}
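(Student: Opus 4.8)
The plan is to derive \eq{eq:packing} by feeding the nondegeneracy hypothesis into the general inequality \eq{eq:bound}, which has already been established above without any assumption on the code. Recall that \eq{eq:bound} reads $\dim(S)\ge\dim(Q)\,\rank(M)$, with $M$ the Knill--Laflamme matrix of \eq{eq:knill_laflamme} associated with some Kraus family of $\mathcal{E}$; it was obtained by chaining $\dim(S)\ge\rank(\rho^{R'E'})$ [\eq{eq:dim00}], the product form $\rho^{R'E'}=\rho^{R'}\otimes\rho^{E'}$ supplied by Prop.~\ref{thm:uncorrelated}, the identification $\rho^{E'}=M^{T}$ supplied by Prop.~\ref{thm:deletion}, and $\rank(\rho^{R'})=\rank(\rho^{S})=\dim(Q)$ for an input state supported exactly on $Q$. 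Hence the whole task reduces to evaluating $\rank(M)$ when the code is nondegenerate.

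First I would use the fact, noted above, that $\rank(M)$ does not depend on the chosen Kraus family, and compute it on a minimal Kraus decomposition $\mathcal{E}(\rho)=\sum_{i=1}^{r}K_i\rho K_i^{\dagger}$, whose cardinality is $r=\rank(R_{\mathcal{E}})$. For this decomposition $M$ is an $r\times r$ Hermitian matrix, and the hypothesis of nondegeneracy says precisely that $M$ is non-singular, so $\rank(M)=r=\rank(R_{\mathcal{E}})$. Substituting back into \eq{eq:bound} gives $\dim(S)\ge\dim(Q)\,\rank(R_{\mathcal{E}})$, which is \eq{eq:packing}, completing the proof.

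The step most in need of justification --- though it is routine --- is the equality $\rank(M)=\rank(R_{\mathcal{E}})$ for nondegenerate codes, and in particular the claim that the minimal Kraus family yields an $r\times r$, full-rank $M$. Two standard facts underlie it and should be spelled out: (i) the operators of a minimal Kraus decomposition are linearly independent and span the same operator space $S_E$ as any other Kraus family of $\mathcal{E}$, so $\dim(S_E)=r=\rank(R_{\mathcal{E}})$ and the $K_i$ form a basis of $S_E$; (ii) under any change of basis $L_i=\sum_j G_{ij}K_j$ of $S_E$ with $G$ invertible, $M$ transforms by Hermitian congruence, $M\mapsto\overline{G}\,M\,\overline{G}^{\dagger}$, so that both $\rank(M)$ and the property of being non-singular are basis-independent --- which is what makes the notion of nondegeneracy unambiguous in the first place. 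Granting (i)--(ii), ``$M$ non-singular'' is equivalent to ``$\rank(M)=r$'', and \eq{eq:bound} then closes the argument.
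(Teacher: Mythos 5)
Your proposal is correct and follows essentially the same route as the paper's primary argument: it invokes \eq{eq:bound}, obtained from Prop.~\ref{thm:uncorrelated} and Prop.~\ref{thm:deletion}, and then uses the nondegeneracy hypothesis to identify $\rank(M)$ with $\rank(R_{\mathcal{E}})$ via a minimal Kraus decomposition. Your explicit justification that $\rank(M)$ is invariant under a change of Kraus basis (Hermitian congruence $M\mapsto\overline{G}\,M\,\overline{G}^{\dagger}$) is a welcome elaboration of a fact the paper merely asserts; the paper also offers a second, independent proof via diagonalizing $M$ and the polar decomposition, which you did not need.
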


\begin{proof}
  The thesis follows immediately from Prop. \ref{thm:uncorrelated} and
  \ref{thm:deletion}, along with the considerations above.

  Here we provide an alternative short proof that makes use of more
  technicalities. Diagonalize the matrix $M$ in Eq. (\ref{eq:knill_laflamme})
  to obtain a diagonal matrix $D$ and a new error basis $J_i$. The
  correctability condition in Eq. (\ref{eq:knill_laflamme}) then becomes
  \begin{equation}\label{eq:kldiag}
    P_Q J_i^{\dagger} J_j P_Q = D_{ij} P_Q.
  \end{equation}
  Make use of the polar decomposition and of the correctability condition in
  Eq. (\ref{eq:kldiag}) to obtain
  \begin{equation}
    J_i P_Q = U_i \sqrt{P J_i^{\dagger} J_i P_q} = \sqrt{D_{ii}} U_i P_Q,
  \end{equation}
  where $U$ is some unitary matrix. Thus, the action of the error $J_i$ is to
  rotate $Q$ into the subspace defined by the projector $P_i := U_i P
  U_i^\dagger = J_i P U_i^\dagger / \sqrt{D_{ii}}$. Since such subspaces are
  orthogonal by Eq. (\ref{eq:kldiag}) and are in number of
  $\rank(D)=\rank(M)$, eq. (\ref{eq:bound}) follows. Finally, using the
  non-degeneracy hypothesis, $\rank(M)=\rank(R_{\cal{E}})$ and the statement
  follows.
\end{proof}

Notice that in Eq. (\ref{eq:packing}) only the rank of the Choi-Jamio\l kowski
operator describing the noise is involved, but no assumption on the form of
the noise process affecting the encoding system has been formulated. The
quantum Hamming bound \cite{ekert96}, which holds for noise acting
independently on the encoding systems, can be derived from \eq{eq:packing} as
a particular case.  Actually, if we look at the case of qubits and wish to
correct noise affecting at most $t$ qubits, we consider a basis of error
operators given by products of Pauli matrices involving up to $t$ qubits.
Then, correcting all errors is equivalent to correcting the random-unitary
channel $\cal E$ whose Kraus operators are proportional to the possible
products of $i\le t$ Pauli matrices \cite{nota}.  Since Pauli matrices are
orthogonal, this Kraus representation is already minimal, whence
$\textrm{rank}(R_{\mathcal{E}})$ can be straightforwardly derived counting the
number of independent Kraus operators (the ones affecting $i$ qubits are
$3^i{n \choose i}$) and this leads to
\begin{equation}\label{eq:hamming}
  2^n \ge 2^k \sum_{i=0}^t 3^i {n \choose i}.
\end{equation}
The above formula can be straightforwardly generalized to $q$-dimensional
systems by replacing powers of $2$ with powers of $q$, and 3 with $q^2-1$.

\section{Degenerate codes for correlated noise}\label{sect:examples}

We will now consider the case where noise is correlated, i.e. it does not act
independently on the encoding systems and cannot be expressed as
$\mathcal{E}_1 \otimes \mathcal{E}_2 \otimes \dots \otimes \mathcal{E}_n$,
where $\mathcal{E}_i$ represent the noise process acting on the $i$-th system
in the encoding space.  We will consider in the following the case of qubits
with Pauli correlated noise, i.e. correlated channels in which each Kraus
operator is the product of Pauli matrices \cite{macchiavello02}.

As a first example consider the following CP map $\mathcal{E}$
\begin{eqnarray}\label{eq:2_qubits_correlated_noise}
  \mathcal{E}(\rho) = p \rho + \sum_{i = 1 \dots n, j > i} (p_{X,ij}X_iX_j
  \rho X_jX_i +\nonumber\\ p_{Y,ij}Y_iY_j \rho Y_jY_i + p_{Z,ij}Z_iZ_j \rho
  Z_jZ_i)\;,
\end{eqnarray}
where with probability $p=1-\sum_{i = 1 \dots n, j > i}
p_{X,ij}+p_{Y,ij}+p_{Z,ij}$ the input state is left unchanged, while with
probabilities $p_{X,ij}$, $p_{Y,ij}$ and $p_{Z,ij}$ pairwise Pauli operators
$X$, $Y$ and $Z$ are applied to qubits $i$ and $j$ respectively.  By
evaluating the rank of the above CP map, the quantum packing bound
(\ref{eq:packing}) in this case takes the simple form
\begin{equation}
  2^n \ge 2^k \left[1+3{n \choose 2}\right]\;.
\end{equation}
Let us consider the simple case $k=1$: the smallest integer that satisfies the
above bound is $n=7$. Nevertheless, we can easily construct error correcting
codes with lower values for $n$.  Consider the quantum code spanned by
\begin{equation}
  \ket{\bar0} = \ket{000} \qquad \ket{\bar1} = \ket{111}.
\end{equation}
Notice that the above states are also the codewords that saturate the Hamming
bound for classical error on at most one qubit ($t=1$) \cite{ekert96}.  Our
error correcting strategy works as follows: we encode logical $\ket{0}$ into
$\ket{\bar0}$ and logical $\ket{1}$ into $\ket{\bar1}$.  As mentioned above,
this kind of noise either leaves the qubits unchanged, or acts on two of them
with the same Pauli operator.  Notice that the two codewords are not changed
by the application of the $Z$ matrix on any pair of qubits.  As a consequence
of this, the application of pairwise $Y$ operators gives the same result as
the application of pairwise $X$ operators, namely the code is thus degenerate.
Therefore, we have only to correct errors due to the action of the $X$
operators.  To achieve this, we perform a projective measurement on the
bidimensional subspaces $S_{00} = \span\{\ket{000},\ket{111}\}$, $S_{01} =
\span\{\ket{100},\ket{011}\}$, $S_{10} = \span\{\ket{010},\ket{101}\}$ and
$S_{11} = \span\{\ket{001},\ket{110}\}$.  If the outcome of the measurement is
$00$, noise has not affected any qubit; if the result is $01$, noise has
affected qubits $2$ and $3$; if the result is $10$, noise has affected qubits
$1$ and $3$; if the result is $11$, noise has affected qubits $1$ and $2$.  In
these last three situations, acting with $X$ on the corresponding pair of
qubits gives the original qubits.  As we can see, this code exploits the
invariance of the coding subspace under the action of two $Z$ operators to
allow for perfect error correction while strongly violating the quantum
packing bound.

The above error correcting strategy can be also successfully applied to a
generalization of the correlated noise (\ref{eq:2_qubits_correlated_noise}),
where we can add additional terms involving products of even number of Pauli
operators along the same direction.  For example, it is possible to correct in
the same way errors acting also on four qubits.  In this case, the choice of
the codewords is
\begin{equation}
  \ket{\bar0} = \ket{00000} \qquad \ket{\bar1} = \ket{11111},
\end{equation}
and the error correction is performed in a way similar to the previous one.
As before, this code is highly degenerate because it is invariant under the
application of the product of an even number of $Z$ Pauli operators.  The
error syndrome is discovered by performing a projective measurement on the
subspaces $\span\{\ket{00000},\ket{11111}\}$ corresponding to no errors,
$\span\{\ket{11000},\ket{00111}\}$ and all possible permutations corresponding
to two qubits error, and $\span\{\ket{11110},\ket{00001}\}$ and all possible
permutations corresponding to four qubits error.  Then, error correction is
performed in a similar way to the case discussed before.

In this way we have constructed a degenerate quantum code which violates the
corresponding quantum packing bound for nondegenerate codes
\begin{equation}
  2^n \ge 2^k \left[1+3{n \choose 2}+3{n \choose 4} \right]\;,
\end{equation}
which would require for $k=1$ a minimum $n=14$.  By generalizing this
procedure, we can efficiently correct noise acting on every even number of
qubits.  In fact, the strategy we have provided allows one to correct
correlated errors acting on $2, 4, 6, \dots 2m$ qubits coding on $n=2m+1$
qubits.  The two coding states are then given by
\begin{equation}
  \ket{\bar0} = \ket{0}^{\otimes 2m+1} \qquad \ket{\bar1} = \ket{1}^{\otimes
    2m+1}.
\end{equation}
In this case, the quantum packing bound becomes
\begin{equation}
  2^n \ge 2^k \sum_{i=0}^{m}3{n \choose 2i}
\end{equation}

We emphasize that the possibility of achieving such compact quantum codes for
correlated noise of the form studied here is related to the fact that we
consider error operators acting on an even number of qubits.  We now consider
the problem of correcting correlated noise on three qubits.  The noisy channel
is then of the form
\begin{eqnarray}\label{eq:3_qubits_correlated_noise}
  \mathcal{E}(\rho) = p\rho + \sum_{i = 1 \dots n, j > i, k > j} p_{X,ijk}
  X_iX_jX_k \rho X_kX_jX_i +\nonumber\\ p_{Y,ijk} Y_iY_jY_k \rho Y_kY_jY_i +
  p_{Z,ijk} Z_iZ_jZ_k \rho Z_kZ_jZ_i.
\end{eqnarray}
The smallest number of physical qubits we can employ for such a channel is
$n=3$, which corresponds to a nondegenerate code which saturates the
corresponding quantum packing bound $2^n\geq 2^k (1+3{n \choose 3})$.
Actually, it is possible to encode one logical qubit on $n=3$ physical ones by
employing the additional two qubits as an ancilla initially fixed in the state
$\kket{a} = \ket{0}\ket{+}$, where $\ket{+}=1/\sqrt{2}(\ket{0}+\ket{1})$.
After the action of noise, the two ancilla qubits are measured in the
computational basis and in the $\ket{+}$, $\ket{-}$ basis respectively.  From
the result of this measurement, we learn exactly which of the four Kraus
operator has acted on the qubits because the operators $XX$, $YY$ and $ZZ$
applied to the above state $\kket{a} = \ket{0}\ket{+}$ transform it to the
mutually orthogonal states $\ket{1}\ket{+}$, $\ket{1}\ket{-}$ and
$\ket{0}\ket{-}$ respectively.  In order to correct the errors, after learning
the result of the measurement, we either do nothing or we apply one of the
three Pauli operators on the first qubit to recover the noiseless state.  This
strategy, in contrast to what happens in quantum codes for independent noise,
does not involve multipartite entanglement in the encoding systems as the
three qubits in the encoded state are always factorized.  In this case there
seems to be an intriguing balance between the correlations of the noise and
the entanglement in the encoding system: if the noise is fully correlated on
the three qubits then no entanglement is needed for encoding, while if the
noise is independent then encoding is performed on multipartite entangled
states.

The above procedure can be employed to correct correlated noise of the form
(\ref{eq:3_qubits_correlated_noise}) acting on an arbitrary number $n$ of
qubits by encoding $k=n-2$ qubits.  As before, the $k$-qubit state to be
protected is encoded appending to it the two ancilla qubits in state
$\ket{0}\ket{+}$.  After the receipt of the encoded state, the previously
described measurement of the ancilla will give the syndrome and the necessary
operations to be performed on the rest of the qubits to rescue the original
state. In this case we again construct nondegenerate codes that saturate the
corresponding quantum packing bound.

\section{Conclusion}\label{sect:conclusion}

In this paper we provided a quantum packing bound for nondegenerate codes. The
bound holds for any kind of noise and depends on the rank of the Choi-Jamio\l
kowski operator representing the noise process.  The quantum Hamming bound is
then recovered in the particular case of arbitrary noise acting independently
on a fixed number of encoding systems.  While the quantum Hamming bound has
not been violated so far, in the case of correlated noise we have shown how to
exploit degeneracy to violate the quantum packing bound and achieve perfect
quantum error correction with fewer resources than those needed for
nondegenerate codes.

This work was supported by Italian Ministry of Education through PRIN 2008 and
by the European Community through COQUIT and CORNER projects. Research at
Perimeter Institute for Theoretical Physics is supported in part by the
Government of Canada through NSERC and by the Province of Ontario through MRI.

\end{document}